\theoremstyle{thmstyleone}%
\newtheorem{theorem}{Theorem}
\theoremstyle{thmstyletwo}%
\theoremstyle{thmstylethree}%
\begin{document}

\title[Article Title]{A New Forward Discriminant Analysis Framework Based On Pillai's Trace and ULDA}


\author*[1]{\fnm{Siyu} \sur{Wang}}\email{swang739@wisc.edu} 

\author[1]{\fnm{Kehui} \sur{Yao}}\email{kyao24@wisc.edu}
%

\affil[1]{\orgdiv{Department of Statistics}, \orgname{University of Wisconsin-Madison}, \orgaddress{\street{1300 University Ave}, \city{Madison}, \postcode{53706}, \state{WI}, \country{USA}}}


%


\abstract{Linear discriminant analysis (LDA), a traditional classification tool, suffers from limitations such as sensitivity to noise and computational challenges when dealing with non-invertible within-class scatter matrices. Traditional stepwise LDA frameworks, which iteratively select the most informative features, often exacerbate these issues by relying heavily on Wilks' $\Lambda$, potentially causing premature stopping of the selection process. This paper introduces a novel forward discriminant analysis framework that integrates Pillai's trace with Uncorrelated Linear Discriminant Analysis (ULDA) to address these challenges, and offers a unified and stand-alone classifier. Through simulations and real-world datasets, the new framework demonstrates effective control of Type I error rates and improved classification accuracy, particularly in cases involving perfect group separations. The results highlight the potential of this approach as a robust alternative to the traditional stepwise LDA framework.}

\keywords{Linear Discriminant Analysis (LDA), Uncorrelated Linear Discriminant Analysis (ULDA), Stepwise Selection, Pillai's Trace, Wilks' $\Lambda$}



\maketitle

\section{Introduction}

LDA seeks to find linear combinations of features that can best separate groups by maximizing the ratio of between-group variance to within-group variance. However, LDA is sensitive to noise variables and prone to overfitting. To address these issues, stepwise LDA is introduced, which iteratively adds or removes variables based on predefined inclusion and exclusion criteria. Various versions of stepwise LDA have been developed, ranging from stand-alone programs like DIRCRIM \citep{mccabe1975computations} and ALLOC-1 \citep{hermans1976manual}, to options within statistical packages such as BMDP \citep{dixon1990bmdp}, SPSS$^{\text{\textregistered}}$ \citep{ibm2021ibm}, and SAS$^{\text{\textregistered}}$ \citep{sas2014sas}. While specific implementations may differ in variable selection criteria, most follow a common framework discussed in \cite{jennrich1977stepwise}. Nonetheless, the heavy reliance on Wilks' $\Lambda$ presents several challenges, some of which can be mitigated by substituting it with Pillai's trace.\\

Traditional LDA relies on the inverse of the within-class scatter matrix, leading to computational issues when the matrix is non-invertible. In contrast, ULDA \citep{ye2005characterization, ji2008generalized} uses a different loss function based on the trace to solve this problem. Since both ULDA and Pillai's trace use trace-based criteria, it is logical to integrate them to develop a more robust stepwise LDA framework.\\

This paper is organized as follows: Section \ref{sec:Lambda} discusses the limitations of Wilks' lambda in the traditional stepwise LDA framework and highlights the computational and statistical challenges that arise. Section \ref{sec:Algorithm} introduces our proposed forward selection framework based on Pillai's trace and ULDA, focusing on the algorithmic advancements and theoretical properties developed to resolve the challenges discussed. In Section \ref{sec:Empirical}, we present empirical analyses, including simulations and real data analyses, to demonstrate the effectiveness of the proposed method in controlling Type I error rates and improving classification accuracy when Wilks' $\Lambda$ fails. We conclude in Section \ref{sec:Conclusion}.

\section{Problems with Wilks' $\Lambda$}
\label{sec:Lambda}

First, we briefly introduce the most widely used stepwise LDA framework. Suppose we have a data matrix $\mathbf{X} \in \mathbb{R}^{N \times M}$ with $N$ observations and $M$ features. Our response $\mathbf{y} \in \mathbb{R}^{N}$ is a factor vector containing $J$ classes. Let $\mathbf{x}_{ji} \in \mathbb{R}^{M}$ represent the $i$th observation from class $j$, $\bar{\mathbf{x}}_j \in \mathbb{R}^{M}$ be the mean vector for class $j$ derived from its $n_j$ instances, and $\bar{\mathbf{x}} \in \mathbb{R}^{M}$ denote the overall mean vector across all samples. $\mathbf{H}_B \in \mathbb{R}^{J \times M}$ and $\mathbf{H}_W \in \mathbb{R}^{N \times M}$ are defined as:

\begin{align}
\mathbf{H}_B &= \left[\sqrt{n_1}\left(\bar{\mathbf{x}}_1-\bar{\mathbf{x}}\right), \sqrt{n_2}\left(\bar{\mathbf{x}}_2-\bar{\mathbf{x}}\right), \dots, \sqrt{n_J}\left(\bar{\mathbf{x}}_J-\bar{\mathbf{x}}\right)\right]^T, \nonumber \\
\mathbf{H}_W &= \left[\left(\mathbf{x}_{11} - \bar{\mathbf{x}}_1\right), \dots, \left(\mathbf{x}_{1n_1} - \bar{\mathbf{x}}_1\right), \left(\mathbf{x}_{21} - \bar{\mathbf{x}}_2\right), \dots, \left(\mathbf{x}_{2n_2} - \bar{\mathbf{x}}_1\right), \dots, \left(\mathbf{x}_{Jn_J} - \bar{\mathbf{x}}_J\right)\right]^T.
\label{eq:HbHw}
\end{align}

Then, the between-class scatter matrix $\mathbf{S}_B$, within-class scatter matrix $\mathbf{S}_W$, and total scatter matrix $\mathbf{S}_T$ can be defined as:

\begin{alignat}{2}
\mathbf{S}_B &= \sum_{j=1}^J n_j\left(\bar{\mathbf{x}}_j-\bar{\mathbf{x}}\right)\left(\bar{\mathbf{x}}_j-\bar{\mathbf{x}}\right)^{\prime} & &= \mathbf{H}_B^{T}\mathbf{H}_B \nonumber \\
\mathbf{S}_W &= \sum_{j=1}^J \sum_{i=1}^{n_j}\left(\mathbf{x}_{ji}-\bar{\mathbf{x}}_j\right)\left(\mathbf{x}_{ji}-\bar{\mathbf{x}}_j\right)^{\prime} & &= \mathbf{H}_W^{T}\mathbf{H}_W \nonumber \\
\mathbf{S}_T &= \sum_{j=1}^J \sum_{i=1}^{n_j} \left(\mathbf{x}_{ji}-\bar{\mathbf{x}}\right)\left(\mathbf{x}_{ji}-\bar{\mathbf{x}}\right)^{\prime} & &= \mathbf{S}_B + \mathbf{S}_W.
\end{alignat}

Let $\mathbf{S}_T(1,2,\dots,p)$ and $\mathbf{S}_W(1,2,\dots,p)$ be the total and within-class scatter matrix with $p$ variables $\{\mathbf{x}^{(1)}, \mathbf{x}^{(2)}, \dots, \mathbf{x}^{(p)} \}$ added. Then the Wilks' $\Lambda$ is defined as:

\begin{equation}
\Lambda(1,2,\dots,p) = \frac{\det(\mathbf{S}_W(1,2,\dots,p))}{\det(\mathbf{S}_T(1,2,\dots,p))}. \label{eq:Wilks}
\end{equation}

After adding $\mathbf{x}^{(p+1)}$, we use partial Wilks' $\Lambda$ to evaluate its marginal effect:

\begin{equation}
\Lambda(p+1)=\frac{\Lambda(1,2, \dots, p, p+1)}{\Lambda(1,2, \dots, p)}. \label{eq:PartialWilks}
\end{equation}

The null hypothesis $H_0$ states that the variables $\{\mathbf{x}^{(1)}, \mathbf{x}^{(2)}, \dots, \mathbf{x}^{(p+1)} \}$ are from a multivariate normal distribution and are independent of the response $\mathbf{y}$. Unless otherwise specified, this $H_0$ will be assumed as the null hypothesis throughout the remainder of this paper. Under $H_0$, the partial $F$-statistic follows an $F$-distribution:

\begin{equation}
F=\frac{N-J-p}{J-1} \frac{1-\Lambda(p+1)}{\Lambda(p+1)} \sim F_{J-1, N-J-p}.
\end{equation}

In the $(p+1)$-th step, partial $F$-statistics are calculated for the remaining $M-p$ variables, and the variable with the largest $F$-statistic is selected. It will be included in the model if it meets specific inclusion criteria, such as $F \geq 4$, or if the corresponding $p$-value is below $\alpha$.\\

Following the addition of a variable, the deletion phase begins. With $p+1$ variables now in the model, $p+1$ new pairs of scatter matrices $(\mathbf{S}_{W_{i}}, \mathbf{S}_{T_{i}})$ are computed, each excluding one variable $\mathbf{x}^{(i)}$. The partial $F$-statistics are then calculated for each pair, and the variable associated with the smallest $F$-statistic is considered for removal if the exclusion criterion is satisfied (e.g., $F < 3.996$ in BMDP). This stepwise process continues until all variables have been added, or no further variables can be added or removed.\\

Next, we introduce three major drawbacks of using Wilks' lambda in the current stepwise LDA framework.

\subsection{Premature Stopping}
\label{subsec:LambdaStop}

When perfect linear dependency exists in the data matrix, we would expect $\frac{0}{0}$ on the right-hand side of equation \eqref{eq:Wilks}, causing errors in some stepwise LDA programs, such as \texttt{klaR::greedy.wilks} in R. Wilks' $\Lambda$ is not well-defined under perfect linear dependency, and to allow the stepwise selection to continue, a quick fix is to manually set it to 1, indicating no discrimination power.\\

We know from equation \eqref{eq:PartialWilks} that the partial $\Lambda$ is the ratio of two Wilks' $\Lambda$. Most programs will stop the stepwise LDA process when $\Lambda = 0$, as all subsequent partial $\Lambda$ calculations become $\frac{0}{0}$ and are thus ill-defined. However, stopping at $\Lambda = 0$ isn't always appropriate, as it indicates that one group of classes is perfectly separable from another, but it doesn't necessarily imply perfect separation of all classes in non-binary classifications, as shown in Figure~\ref{Fig:Wilks0}. After selecting $X_2$, Wilks' $\Lambda = 0$ since the within-class variance is zero on $X_2$, causing the stepwise selection to stop. It successfully separates class A from classes B and C but cannot distinguish class B from class C. Additionally, when multiple variables result in $\Lambda = 0$, only one is selected, leading to the potential waste of useful information contained in the remaining variables.

\begin{figure}[htbp]
  	\centering
	\includegraphics[width = 0.5\textwidth]{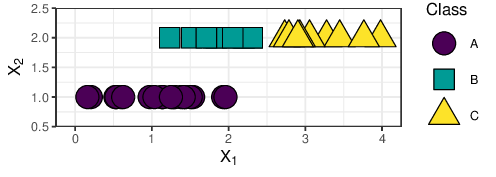}	
	\caption{A simulated pattern used in Section \ref{subsec:LambdaStop}. The stepwise LDA based on Wilks' $\Lambda$ selects $X_2$ and ignores $X_1$, leading to suboptimal performance}
	\label{Fig:Wilks0}
\end{figure}

\subsection{Partial $F$'s Distribution Under Stepwise Selection}
\label{subsec:LambdaPartial}

Here, we use an example to demonstrate that the distribution of the partial $F$-statistic does not follow an $F$-distribution under the stepwise selection framework, casting doubt on the validity of the associated hypothesis testing. Under $H_0$, the partial $F$-statistic follows an $F$-distribution \citep{rencher2002methods}. However, the original proof assumes that variables are ordered randomly, rather than selected through a stepwise process. Intuitively, if we maximize the $F$-statistic at each step, the result will be biased, as noted in \cite{rencher1980bias}.\\

The simulation setup is as follows: in each round, we simulate $N = 150$ observations from $J = 3$ classes, with each class having the same sampling probability of $1/3$. We simulate $X_1$ from a standard normal distribution in the one-variable scenario and simulate $X_1$ and $X_2$ from independent standard normal distributions in the two-variable scenario. We simulate 10,000 rounds and record the partial $F$-statistic from each round at each step. We then compare the simulated $F$-statistic with the theoretical distribution, as summarized in Figure~\ref{Fig:partialLambda}.\\

 The upper plot corresponds to the one-variable scenario. Since we have only one variable, $X_1$, no selection occurs, and the simulated distribution matches the theoretical distribution closely. The lower plot shows the two-variable scenario. Here, the stepwise selection first chooses $X^{(1)}$, which has a larger partial $F$-statistic (and a smaller partial $\Lambda$-statistic) compared to the other variable, resulting in an upward bias in the first partial $F$-statistic. The second partial $\Lambda$ is the ratio of the overall Wilks' $\Lambda$ (with two variables) to the first (partial) Wilks' $\Lambda$, so its partial $\Lambda$-statistic is biased upwards and its partial $F$-statistic is biased downwards. Note that the theoretical distributions for the first and second partial $F$ are different ($F_{2,147}$ and $F_{2,146}$), but the difference is negligible, so we assign the same color to both distributions in the plot.

\begin{figure*}[htbp]
  	\centering
	\includegraphics{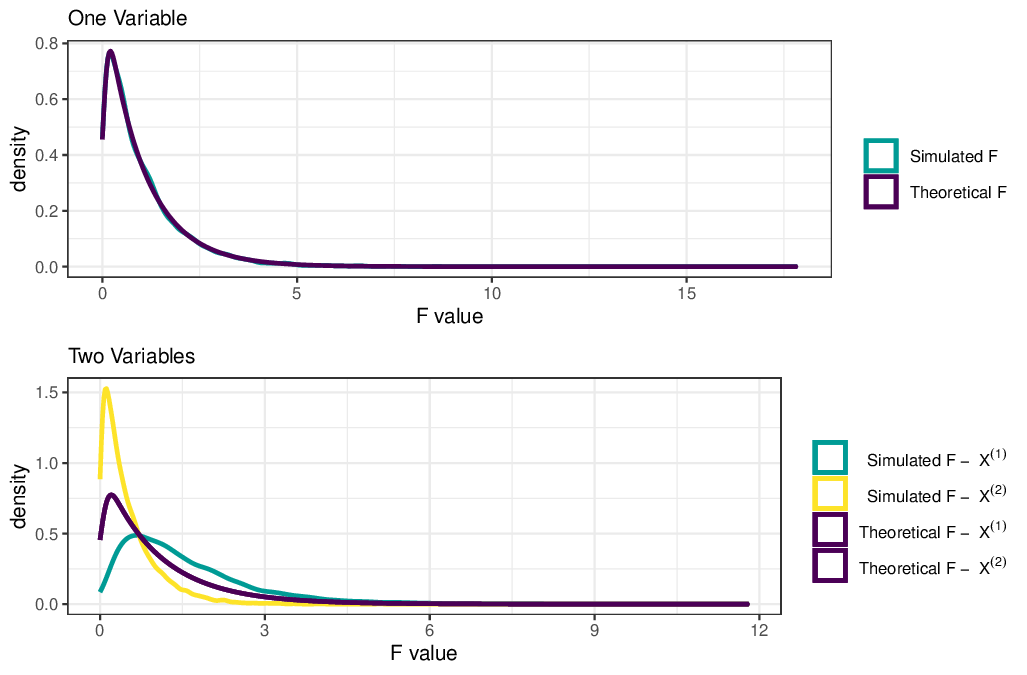}	
	\caption{The partial $F$-statistic does not follow an $F$-distribution under the stepwise selection framework (Section \ref{subsec:LambdaPartial}). When there are two variables (lower plot), the partial $F$ for the first variable $X^{(1)}$ is biased upwards, while the partial $F$ for the second variable $X^{(2)}$ is biased downwards}
	\label{Fig:partialLambda}
\end{figure*}

\subsection{Inflated Type I Error Rate}
\label{subsec:LambdaTypeI}

In most programs, a fixed threshold of $4$ is applied to the partial $F$-statistic. Another possible criterion is comparing the $p$-value of the partial $F$-statistic with the predefined $\alpha$. Here, we simulate two scenarios to demonstrate that the type I error is inflated since both methods fail to account for the number of variables screened. For simplicity, forward selection is used instead of stepwise selection throughout this paper unless otherwise stated.\\

We use the iris dataset for our first simulation. It contains $N = 150$ flowers from $J = 3$ species (50 setosa, 50 versicolor, and 50 virginica), along with four features that characterize the flowers. We then add $M$ mutually independent standard normal noise variables to it. Stepwise selections with both types of thresholds are performed, and we conclude that a type I error is made if at least one noise variable is selected. We let $M = 1, 2, 4, 8, 16, 32, 64, 128$, and for each $M$ we repeat the simulation 2,000 times to obtain a confidence band. $\alpha$ is set to $0.05$ throughout this paper unless otherwise specified.\\

In the second scenario, we simulate the null case where no variables are informative. We reuse the setup from the first scenario but remove all four original features from the iris dataset, leaving all remaining features independent of the species. The results from both scenarios are summarized in Figure~\ref{Fig:typeIwilks}. Both methods fail to control the type I error in either scenario. Due to the issue with multiple testing, the type I error rate increases as the number of noise variables grows. The fixed threshold of $4$ performs slightly better than using the $p$-value, partly because a threshold of $4$ corresponds to a $p$-value of approximately $0.02$ in this setting, based on $F_{2,147}(4) \approx F_{2,143}(4) \approx 0.02$.

\begin{figure*}[htbp]
  	\centering
	\includegraphics{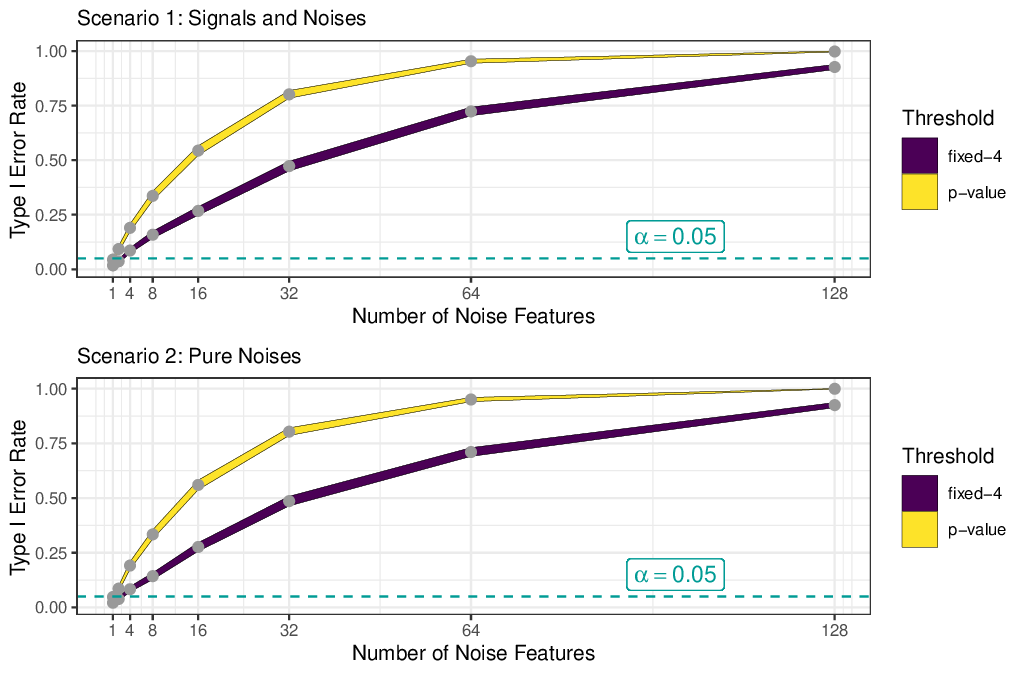}	
	\caption{The type I error rate of the forward stepwise selection is positively correlated with the number of noise features (Section \ref{subsec:LambdaTypeI}). Both types of thresholds fail to control it at the specified $\alpha$ level. The ribbon width represents the 95\% confidence interval}
	\label{Fig:typeIwilks}
\end{figure*}

\section{The Proposed Algorithm}
\label{sec:Algorithm}

In this section, we first introduce ULDA and compare it to classical LDA. Next, we present our enhancements to ULDA and make it a stand-alone classifier. We then introduce the forward ULDA framework and derive the distributions of the test statistics used. Finally, we demonstrate that the type I error rate is well controlled within the new framework and summarize the algorithm.

\subsection{ULDA: Overview and Enhancements}
\label{subsec:ULDA}

Uncorrelated LDA (ULDA) is an extension of LDA that addresses scenarios where the within-class scatter matrix, $\mathbf{S}_W$, is not invertible. Fisher's criterion aims to find transformation vectors $\mathbf{w} \in \mathbb{R}^{M}$ that maximizes the ratio:

\begin{equation}
\arg \max _{\mathbf{w}} \frac{\mathbf{w}^T \mathbf{S}_B \mathbf{w}}{\mathbf{w}^T \mathbf{S}_W \mathbf{w}}
\label{eq:fisherRatio}
\end{equation}

The optimal $\mathbf{W}$ is derived by solving an eigenvalue decomposition on $\mathbf{S}_W^{-1}\mathbf{S}_B$. The resulting eigen vectors $\mathbf{W} = [\mathbf{w}_1, \mathbf{w}_2, \dots]$ projects the original data $\mathbf{X}$ into orthogonal linear discriminant scores $\mathbf{X}\mathbf{w}_i$, ranked in descending order of their signal-to-noise ratios (eigenvalues). However, challenges arise when $\mathbf{S}_W$ is not invertible, such as when there are more variables than observations or when variables are linearly dependent. On the other hand, ULDA uses a different criterion:
\begin{equation}
		\mathbf{W}=\arg \max _{\mathbf{W}^T \mathbf{S}_T \mathbf{W}=I} \operatorname{trace}\left(\left(\mathbf{W}^T \mathbf{S}_T \mathbf{W}\right)^{+}\left(\mathbf{W}^T \mathbf{S}_B \mathbf{W}\right)\right),
		\label{eq:ULDA}
\end{equation}

where $\mathbf{A}^{+}$ denotes the Moore-Penrose inverse of matrix $\mathbf{A}$. This ensures that the ULDA solution always exists, and \cite{ye2004feature} shows that ULDA is equivalent to classical LDA when $\mathbf{S}_T$ is nonsingular. Several insights can be drawn from equation \eqref{eq:ULDA}:

\begin{itemize}
	\item $\mathbf{W}^T \mathbf{S}_T \mathbf{W} = I$, meaning we deliberately discard the null space of $\mathbf{S}_T$. This approach is reasonable because the null space of $\mathbf{S}_T$ is the intersection of the null spaces of $\mathbf{S}_W$ and $\mathbf{S}_B$, and equation \eqref{eq:fisherRatio} is not affected by vectors from the null space of $\mathbf{S}_B$.
	\item If we change the constraint from $\mathbf{W}^T \mathbf{S}_T \mathbf{W}=I$ to $\mathbf{W}^T\mathbf{W}=I$, it becomes Orthogonal LDA (OLDA) \citep{ye2005characterization}. However, the constraint $\mathbf{W}^T \mathbf{S}_T \mathbf{W}=I$ is more beneficial when constructing an LDA classifier. To solve equation \eqref{eq:ULDA}, \cite{howland2003structure} presents an algorithm based on Generalized Singular Value Decomposition (GSVD), where $\mathbf{S}_B$, $\mathbf{S}_W$, and $\mathbf{S}_T$ are diagonalized simultaneously. Suppose the rank of $\mathbf{S}_T$ is $M$, then 

\begin{align}
\mathbf{W}^T \mathbf{S}_B \mathbf{W} &= \mathrm{diag}(\alpha_1^2, \alpha_2^2, \dots, \alpha_M^2) \nonumber \\
\mathbf{W}^T \mathbf{S}_W \mathbf{W} &= \mathrm{diag}(\beta_1^2, \beta_2^2, \dots, \beta_M^2) \nonumber \\
\mathbf{W}^T \mathbf{S}_T \mathbf{W} &= \mathrm{diag}(\alpha_1^2 + \beta_1^2, \alpha_2^2 + \beta_2^2, \dots, \alpha_M^2 + \beta_M^2) \nonumber \\
&= \mathrm{diag}(1, 1, \dots, 1)
\end{align}
	
	Since the likelihood-based LDA classifier depends on the inverse of $\mathbf{S}_W$, computational resources are saved if it has already been diagonalized.
	\item Some commonly used test statistics from MANOVA are related to equation \eqref{eq:ULDA}. Pillai's trace is defined as $V = \operatorname{trace}\left(\mathbf{S}_T^{-1}\mathbf{S}_B\right)$, which equals $\sum_{i=1}^M \alpha_i^2$ after transformation. In other words, ULDA can be viewed as maximizing the generalized Pillai's trace under certain constraints. On the other hand, Wilks' $\Lambda = \prod_{i=1}^M \beta_i^2$. We discussed the issue with $\Lambda = 0$ in Section \ref{subsec:LambdaStop}. $\Lambda = 0$ means that $\beta_i = 0$ and $\alpha_i = 1$ for some $i$. In the stepwise selection framework, since Wilks' $\Lambda$ is a product, once it becomes zero, it remains zero. On the other hand, Pillai's trace is a summation, and adding another 1 has no side effect. Pillai's trace is also superior to Wilks' $\Lambda$ in other aspects \citep{rencher2002methods}.
\end{itemize}

Next, we introduce our speed enhancement for the ULDA algorithm when $N > M$. \cite{ye2005characterization} presents a ULDA algorithm that diagonalizes $\mathbf{S}_T$ and $\mathbf{S}_B$ separately. Based on our experience, it is slower by a constant factor compared to the GSVD-based version \citep{howland2003structure}, which is described in Algorithm \ref{alg:LDA/GSVD} (rewritten to suit our needs). However, when the sample size $N$ is large, the SVD decomposition on $\mathbf{K} \in \mathbb{R}^{(J+N) \times M}$ in the line~\ref{alg:LDA/GSVD:line2} of Algorithm~\ref{alg:LDA/GSVD} creates a runtime bottleneck. This can be resolved by reducing the dimension of $\mathbf{K}$ before performing SVD (or complete orthogonal decomposition). Since $\mathbf{H}_W$ contributes most of the dimensionality, and the SVD depends on $\mathbf{H}_W^{T}\mathbf{H}_W$, one possible solution is to replace $\mathbf{H}_W$ with $\mathbf{G}_W \in \mathbb{R}^{M \times M}$, where $\mathbf{H}_W^{T}\mathbf{H}_W = \mathbf{S}_W = \mathbf{G}_W^{T}\mathbf{G}_W$. We suggest performing a reduced QR decomposition $\mathbf{H}_W = \mathbf{Q}_W\mathbf{R}_W$ and replacing $\mathbf{H}_W$ with $\mathbf{R}_W$, so that we have
\begin{equation}
\mathbf{H}_W^{T}\mathbf{H}_W = \mathbf{R}_W^T\mathbf{Q}_W^T\mathbf{Q}_W\mathbf{R}_W = \mathbf{R}_W^T\mathbf{R}_W.
\end{equation}

\cite{park2007fast} follows a similar approach, using a Cholesky decomposition $\mathbf{S}_W = \mathbf{C}_W^{T}\mathbf{C}_W$ and replacing $\mathbf{H}_W$ with $\mathbf{C}_W$. We now use a simulation to evaluate the performance of these two variants and the original ULDA.\\

\begin{algorithm}[htbp]
\caption{ULDA via GSVD \citep{howland2003structure}}
\label{alg:LDA/GSVD}
\begin{algorithmic}[1]
\Require Data matrix $\mathbf{X} \in \mathbb{R}^{N \times M}$ with $N$ observations and $M$ features. Class label $\mathbf{y} \in \mathbb{R}^{N}$ containing $J$ classes
\Ensure Transformation matrix $\mathbf{W} \in \mathbb{R}^{M \times t_2}$ that preserves the class structure

\State Compute $\mathbf{H}_B \in \mathbb{R}^{J \times M}$ and $\mathbf{H}_W \in \mathbb{R}^{N \times M}$ from $\mathbf{X}$ according to equations \eqref{eq:HbHw}.
\State Compute the complete orthogonal decomposition of $\mathbf{K}=\left(\mathbf{H}_B^T, \mathbf{H}_W^T\right)^T \in \mathbb{R}^{(J+N) \times M}$ \label{alg:LDA/GSVD:line2}:
\begin{equation}
\mathbf{P}^T \mathbf{K} \mathbf{Q} = \begin{pmatrix}
\mathbf{R} & \mathbf{0} \\
\mathbf{0} & \mathbf{0}
\end{pmatrix}.
\end{equation}
\State Let $t_1 = \operatorname{rank}(\mathbf{K})$, $t_2 = \operatorname{rank}(\mathbf{P}(1: J, 1: t_1))$. Compute $\mathbf{V}$ from the SVD of $\mathbf{P}(1: J, 1: t_1)$, which satisfies: $\mathbf{U}^T \mathbf{P}(1: J, 1: t_1) \mathbf{V} = \mathrm{diag}(\alpha_1, \alpha_2, \dots, \alpha_{t_2}, 0, \dots, 0)$.
\State Compute the first $t_2$ columns of $\mathbf{Q}\mathbf{R}^{-1} \mathbf{V}$ and assign them to $\mathbf{W}$.

\end{algorithmic}
\end{algorithm}

The simulation setup is as follows: in each round, we simulate $N = 10000$ observations from $J = 10$ classes, with each class having the same sampling probability of $1/10$. The features are $M$ mutually independent standard normal noise variables. We then use the three algorithms to calculate the transformation matrix $\mathbf{W}$. We let $M = 2, 4, 8, 16, 32, 64, 128, 256, 512, 1024$, and for each $M$ we repeat the simulation 30 times to obtain a confidence band. The results are summarized in Figure~\ref{Fig:QR}. Their differences in runtime become larger as the number of features increases. For a data matrix of dimension $10000 \times 1024$, ULDA with QR decomposition is 38.6\% faster than the original GSVD-based ULDA implementation (6.2 seconds vs. 10.1 seconds). Consequently, we added this additional QR decomposition step into the ULDA pipeline.\\

\begin{figure*}[htbp]
  	\centering
	\includegraphics{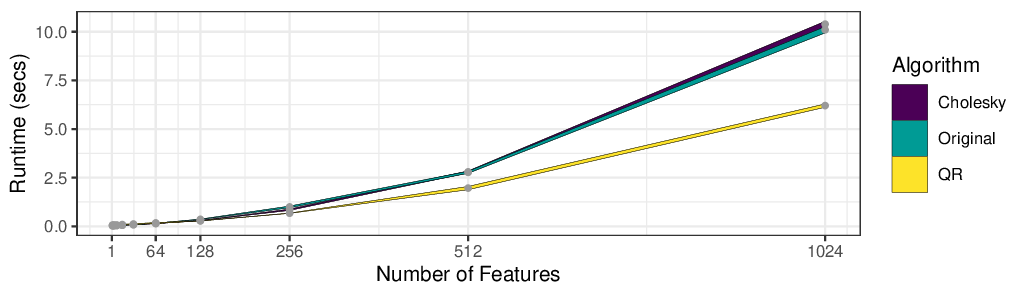}	
	\caption{The runtime comparisons of three ULDA implementations (Section \ref{subsec:ULDA}). The data consists of 10,000 observations from 10 classes. The ribbon width represents the 95\% confidence interval}
	\label{Fig:QR}
\end{figure*}

Another enhancement is our integration of the likelihood structure. The original ULDA method primarily serves as a dimension reduction tool and leaves the classification task to K-nearest-neighbors or other classifiers. However, LDA assumes Gaussian density for the features, and posterior probabilities naturally serve as a powerful classifier, as implemented in the R package \texttt{MASS} and the Python package \texttt{sklearn}. One difficulty lies in the invertibility of $\mathbf{S}_W$. Occasionally, we find a highly discriminative direction such that $\alpha^2 = 1$ (or Wilks' $\Lambda = 0$), meaning the ratio of the between-class variance to the total variance is 1, and one group of classes is perfectly separable from another. We address this by manually setting the within-class variance to $10^{-5}$, which has proven effective in our experience. To understand why it works, note that the discriminant function in LDA is:

\begin{equation}
\delta_j(\mathbf{x})=\mathbf{x}^T \boldsymbol{\Sigma}_W^{-1} \boldsymbol{\mu_j}-\frac{1}{2} \boldsymbol{\mu_j}^T \boldsymbol{\Sigma}_W^{-1} \boldsymbol{\mu_j}+\log \pi_j,
\end{equation}

where $\pi_j$ is the prior for class $j$, and we have $\widehat{\boldsymbol{\mu_j}} = \bar{\mathbf{x}}_j$ and $\widehat{\boldsymbol{\Sigma}_W} = \mathbf{S}_W / (N - J)$. $\mathbf{x}$ will be classified to class $k$ if $k = \arg\max_j \delta_j(\mathbf{x})$. When perfect separation occurs, we aim to capture this through the discriminant function. By setting the within-class variance to $10^{-5}$, the Mahalanobis norm associated with that direction is magnified $10^5$ times, which is large enough to dominate the effects from all other directions.\\

For missing values, we impute them with the median for numerical variables and assign a new level for categorical variables. Additionally, we include missing value indicators for numerical variables. For categorical variables, we use one-hot encoding and transform them into dummy variables. We also accommodate unequal misclassification costs. Let $\mathbf{C}$ represent the misclassification costs, where $C_{ij} = C(i \mid j)$ is the cost of classifying an observation into class $i$ given that it belongs to class $j$. Suppose the predicted posterior probability for an observation is $\widehat{\mathbf{p}} = (\hat{p}_1, \hat{p}_2, \dots, \hat{p}_j)$; then the cost of predicting it to class $i$ is

\begin{equation}
C_i = \sum_{j = 1}^J C(i \mid j) * \hat{p}_j,
\end{equation}

and $\mathbf{x}$ will be classified to class $k$ if $k = \arg\max_{i} C_i(\mathbf{x})$.

\subsection{Forward ULDA: Distribution and Threshold}
\label{subsec:FoLDA}

We now derive the distribution of the test statistics used in the forward ULDA framework. Without loss of generality, we assume $\mathbf{S}_T$ is invertible throughout this section, as redundant columns that cause $\mathbf{S}_T$ to be non-invertible have no discriminative power and can always be removed.

\begin{theorem}
\label{thm:PillaiMono}
Pillai's trace is non-decreasing when new variables are added to the model.
\end{theorem}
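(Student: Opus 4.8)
The plan is to recast Pillai's trace as the trace of a product of two orthogonal projection matrices acting on the $N$-dimensional observation space, one of which depends only on the response $\mathbf{y}$ and not on which features are included, and then to invoke monotonicity of nested orthogonal projections.

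First I would let $\mathbf{X}_c \in \mathbb{R}^{N\times M}$ be the column-centered data matrix (each feature minus its grand mean) and let $\mathbf{P}_G \in \mathbb{R}^{N\times N}$ be the orthogonal projection onto the span of the group-indicator vectors; since every column of $\mathbf{X}_c$ is orthogonal to the all-ones vector, $\mathbf{P}_G\mathbf{X}_c$ simply replaces each entry of a centered feature column by its class mean minus the grand mean. A direct computation then gives $\mathbf{S}_T=\mathbf{X}_c^T\mathbf{X}_c$ and $\mathbf{S}_B=(\mathbf{P}_G\mathbf{X}_c)^T(\mathbf{P}_G\mathbf{X}_c)=\mathbf{X}_c^T\mathbf{P}_G\mathbf{X}_c$. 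Since $\mathbf{S}_T$ is invertible by assumption, the cyclic property of the trace yields
\[
V=\operatorname{trace}\!\big(\mathbf{S}_T^{-1}\mathbf{S}_B\big)=\operatorname{trace}\!\big(\mathbf{X}_c(\mathbf{X}_c^T\mathbf{X}_c)^{-1}\mathbf{X}_c^T\,\mathbf{P}_G\big)=\operatorname{trace}\!\big(\mathbf{P}_{\mathbf{X}}\,\mathbf{P}_G\big),
\]
where $\mathbf{P}_{\mathbf{X}}$ is the orthogonal projection onto the column space of $\mathbf{X}_c$. The crucial point is that $\mathbf{P}_G$ is the \emph{same} matrix for every feature subset, with all dependence on the chosen variables confined to $\mathbf{P}_{\mathbf{X}}$.

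Next I would observe that enlarging the feature set from $\mathbf{X}$ to a superset $\mathbf{X}'$ only enlarges the centered column space, so $\operatorname{col}(\mathbf{X}_c)\subseteq\operatorname{col}(\mathbf{X}'_c)$ and $\mathbf{R}:=\mathbf{P}_{\mathbf{X}'}-\mathbf{P}_{\mathbf{X}}$ is again an orthogonal projection (onto the orthogonal complement of $\operatorname{col}(\mathbf{X}_c)$ inside $\operatorname{col}(\mathbf{X}'_c)$), hence symmetric and idempotent. Therefore
\[
V'-V=\operatorname{trace}\!\big((\mathbf{P}_{\mathbf{X}'}-\mathbf{P}_{\mathbf{X}})\mathbf{P}_G\big)=\operatorname{trace}\!\big(\mathbf{R}\,\mathbf{P}_G\big)=\operatorname{trace}\!\big(\mathbf{P}_G\mathbf{R}\mathbf{R}\mathbf{P}_G\big)=\big\|\mathbf{R}\,\mathbf{P}_G\big\|_F^2\ge 0.
\]
If an added variable happens to lie in the span of the variables already present, $\operatorname{col}(\mathbf{X}_c)$ does not change and $V'=V$, which is still consistent with ``non-decreasing''; otherwise the increment is $\|\mathbf{R}\mathbf{P}_G\|_F^2$, strictly positive precisely when $\mathbf{R}\mathbf{P}_G\neq 0$. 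Adding several variables is covered directly, since the argument uses only $\operatorname{col}(\mathbf{X}_c)\subseteq\operatorname{col}(\mathbf{X}'_c)$, or by iterating the single-variable case.

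The only non-routine step is the factorization $\mathbf{S}_B=\mathbf{X}_c^T\mathbf{P}_G\mathbf{X}_c$ with a feature-independent projector $\mathbf{P}_G$ (and the companion identity $\mathbf{S}_T=\mathbf{X}_c^T\mathbf{X}_c$); once these are in place, everything reduces to the elementary facts that appending columns produces a nested pair of orthogonal projections and that $\operatorname{trace}(\mathbf{P}\mathbf{Q})=\|\mathbf{P}\mathbf{Q}\|_F^2\ge 0$ for orthogonal projections $\mathbf{P},\mathbf{Q}$. An alternative route writes $V=\sum_i\alpha_i^2$ via the simultaneous diagonalization of Section~\ref{subsec:ULDA}, with the $\alpha_i^2$ the squared canonical correlations between the features and the class labels, and appeals to monotonicity of sums of squared canonical correlations as one variable set grows; but since that reduces to the same projection computation, I would present the projection argument directly.
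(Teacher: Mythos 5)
Your proof is correct, but it takes a genuinely different route from the paper's. The paper proves monotonicity by writing the enlarged $\mathbf{S}_B$ and $\mathbf{S}_T$ as $2\times 2$ block matrices, inverting $\mathbf{S}_T$ via the Schur complement of $\mathbf{T}_x$, and reading off the increment explicitly as
\[
V' - V = (\mathbf{t}_x^{\prime}\mathbf{T}_x^{-1}, -1)\,\mathbf{S}_B\,(\mathbf{t}_x^{\prime}\mathbf{T}_x^{-1}, -1)^{\prime}\,(t_z - \mathbf{t}_x^{\prime}\mathbf{T}_x^{-1}\mathbf{t}_x)^{-1},
\]
which is non-negative as the product of a quadratic form in the positive semi-definite $\mathbf{S}_B$ and a positive Schur complement. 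Your argument instead identifies $V=\operatorname{trace}(\mathbf{P}_{\mathbf{X}}\mathbf{P}_G)$ with $\mathbf{P}_G$ fixed, and uses nestedness of the projections $\mathbf{P}_{\mathbf{X}}\preceq\mathbf{P}_{\mathbf{X}'}$ to get $V'-V=\|\mathbf{R}\mathbf{P}_G\|_F^2\ge 0$. Your route is cleaner and more general for the monotonicity claim itself: it handles adding several variables at once without iteration, and (with pseudo-inverses) degrades gracefully when the enlarged $\mathbf{S}_T$ is singular, whereas the paper must assume invertibility and treat $K=0$ separately. What the paper's block-matrix computation buys, and what your argument does not deliver, is the explicit closed form for the increment (its equation for $\operatorname{trace}(\mathbf{S}_T^{-1}\mathbf{S}_B)-\operatorname{trace}(\mathbf{T}_x^{-1}\mathbf{B}_x)$), which is reused verbatim in the proof of Theorem~\ref{thm:alpha} to show that the increment converges to $b_z/t_z$ under independence and hence inherits the beta distribution. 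So if you intended your proof to replace the paper's, you would need to supplement it with that explicit increment formula (or an equivalent, e.g.\ $\|\mathbf{R}\mathbf{P}_G\|_F^2$ re-expressed in terms of $\mathbf{t}_x$, $t_z$, and the blocks of $\mathbf{S}_B$) for the later distributional argument to go through.
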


\begin{proof}
Suppose $\mathbf{X} \in \mathbb{R}^{N \times K}$ have been included in the model, and the new variable to be added is $\mathbf{z} \in \mathbb{R}^{N}$. The new between-class and total scatter matrices with $(K+1)$ variables can be written as:

\begin{equation}
\mathbf{S}_B =
\begin{pmatrix}
\mathbf{B}_x & \mathbf{b}_x \\
\mathbf{b}_x^{\prime} & b_z
\end{pmatrix}
\quad
\mathbf{S}_T =
\begin{pmatrix}
\mathbf{T}_x & \mathbf{t}_x \\
\mathbf{t}_x^{\prime} & t_z
\end{pmatrix},
\end{equation}

where $\mathbf{B}_x$ and $\mathbf{T}_x$ are the previous between-class and total scatter matrices for $\mathbf{X}$. If $K = 0$, the difference in Pillai's trace will be $b_z/t_z$. This value is non-negative since the between-class scatter matrix is positive semi-definite ($b_z \ge 0$) and the total scatter matrix is positive definite ($t_z > 0$). For $K \ge 1$, we aim to show

\begin{equation}
\operatorname{trace}\left(\mathbf{S}_T^{-1}\mathbf{S}_B\right) - \operatorname{trace}\left(\mathbf{T}_x^{-1}\mathbf{B}_x\right) \ge 0.
\end{equation}

Since $\mathbf{S}_T$ and $\mathbf{T}_x$ are invertible, we have


\begin{equation}
\resizebox{\columnwidth}{!}{$
\mathbf{S}_T^{-1} =
\begin{pmatrix}
\mathbf{T}_x^{-1} + \mathbf{T}_x^{-1}\mathbf{t}_x\mathbf{t}_x^{\prime}\mathbf{T}_x^{-1}(t_z - \mathbf{t}_x^{\prime}\mathbf{T}_x^{-1}\mathbf{t}_x)^{-1} & -\mathbf{T}_x^{-1}\mathbf{t}_x(t_z - \mathbf{t}_x^{\prime}\mathbf{T}_x^{-1}\mathbf{t}_x)^{-1} \\
-(t_z - \mathbf{t}_x^{\prime}\mathbf{T}_x^{-1}\mathbf{t}_x)^{-1}\mathbf{t}_x^{\prime}\mathbf{T}_x^{-1} & (t_z - \mathbf{t}_x^{\prime}\mathbf{T}_x^{-1}\mathbf{t}_x)^{-1}
\end{pmatrix}.
$}
\label{eq:STinv}
\end{equation}

According to the block matrix multiplication and the properties of the trace, we have

\begin{align}
\operatorname{trace}\left(\mathbf{S}_T^{-1}\mathbf{S}_B\right) &= \operatorname{trace}\left(\mathbf{T}_x^{-1}\mathbf{B}_x\right) \nonumber \\
&\quad + (\mathbf{t}_x^{\prime}\mathbf{T}_x^{-1}, -1)\mathbf{S}_B(\mathbf{t}_x^{\prime}\mathbf{T}_x^{-1}, -1)^{\prime} \nonumber \\
&\quad \times (t_z - \mathbf{t}_x^{\prime}\mathbf{T}_x^{-1}\mathbf{t}_x)^{-1}.
\label{eq:traceDiff}
\end{align}

$(t_z - \mathbf{t}_x^{\prime}\mathbf{T}_x^{-1}\mathbf{t}_x)^{-1}$ is the Schur complement of the block $\mathbf{T}_x$ of the matrix $\mathbf{S}_T$. Since $\mathbf{T}_x$ and $\mathbf{S}_T$ are both positive definite, we have $(t_z - \mathbf{t}_x^{\prime}\mathbf{T}_x^{-1}\mathbf{t}_x)^{-1} > 0$. $(\mathbf{t}_x^{\prime}\mathbf{T}_x^{-1}, -1)\mathbf{S}_B(\mathbf{t}_x^{\prime}\mathbf{T}_x^{-1}, -1)^{\prime}$ is a quadratic form, and since the middle matrix is positive semi-definite, it is non-negative.
\end{proof}

At its core, ULDA seeks to maximize Pillai's trace $V = \operatorname{trace}\left(\mathbf{S}_T^{-1}\mathbf{S}_B\right)$. According to Theorem~\ref{thm:PillaiMono}, with each variable added, the current Pillai's trace increases (or remains the same). Let $V^{(M)}$ denote the Pillai's trace with all $M$ variables included. The goal of forward selection is to approximate $V^{(M)}$ using $V^{(K)}$, where $K \ll M$.\\

Suppose the variable set $\{\mathbf{x}_{1}, \mathbf{x}_{2}, \dots, \mathbf{x}_{K-1} \}$ has been selected after the first $(K-1)$ steps, and the Pillai's trace of that variable set is $V^{(K-1)}_{\max}$. Here, the subscript indicates that this Pillai's trace is not of $(K-1)$ randomly selected variables but is instead maximized at each step through forward selection. At step $K$, we calculate $V^{(K)}_{(1)}, V^{(K)}_{2}, \dots, V^{(K)}_{(M-K+1)}$, where $V^{(K)}_{(i)}$ denotes the Pillai's trace of the variable set $\{\mathbf{x}_{1}, \mathbf{x}_{2}, \dots, \mathbf{x}_{K-1}, \mathbf{x}_{(i)}\}$. Let $k = \arg\max_i V^{(K)}_{(i)}$. We then select $\mathbf{x}_{(k)}$ as the best candidate at step $K$, and $V^{(K)}_{\max} = V^{(K)}_{(k)}$. To establish an inclusion criterion, we must measure the marginal effect of the added variable $\mathbf{x}_{(k)}$, which corresponds to $V^{(K)}_{\max} - V^{(K-1)}_{\max}$.

\begin{theorem}
\label{thm:alpha}
At step $K$, let $t_K$ be the $(1 - \alpha)^{1/(M-K+1)}$ quantile of $\mathrm{B}\left(\frac{J-1}{2}, \frac{N-J}{2}\right)$. Then, $P(V^{(K)}_{\max} - V^{(K-1)}_{\max} \ge t_K) \le \alpha$ as $N \to \infty$ under $H_0$, where the newly added variable $\mathbf{z}$ is normally distributed and independent of both $\mathbf{X}$ and $\mathbf{y}$.
\end{theorem}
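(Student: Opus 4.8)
The plan is to control the tail probability of the single-step increment by a union bound over the $M-K+1$ candidate variables, then identify the null distribution of each individual increment. First I would fix the already-selected set $\{\mathbf{x}_1,\dots,\mathbf{x}_{K-1}\}$ and write, for a candidate $\mathbf{z}$,
\begin{equation}
\Delta(\mathbf{z}) := V^{(K)}_{(\mathbf{z})} - V^{(K-1)}_{\max}
= \frac{(\mathbf{t}_x^{\prime}\mathbf{T}_x^{-1}, -1)\,\mathbf{S}_B\,(\mathbf{t}_x^{\prime}\mathbf{T}_x^{-1}, -1)^{\prime}}{t_z - \mathbf{t}_x^{\prime}\mathbf{T}_x^{-1}\mathbf{t}_x},
\end{equation}
using exactly the identity \eqref{eq:traceDiff} from the proof of Theorem~\ref{thm:PillaiMono} (with $\mathbf{z}$ in place of the last coordinate). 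Since $V^{(K)}_{\max} = \max_i V^{(K)}_{(i)}$ and $V^{(K)}_{(i)} - V^{(K-1)}_{\max} = \Delta(\mathbf{x}_{(i)})$, a union bound gives
\begin{equation}
P\!\left(V^{(K)}_{\max} - V^{(K-1)}_{\max} \ge t_K\right) = P\!\left(\max_i \Delta(\mathbf{x}_{(i)}) \ge t_K\right) \le \sum_{i} P\!\left(\Delta(\mathbf{x}_{(i)}) \ge t_K\right),
\end{equation}
so it suffices to show each summand is at most $1-(1-\alpha)^{1/(M-K+1)}$, i.e. that $\Delta(\mathbf{z})$ is asymptotically stochastically dominated by $\mathrm{B}\!\left(\frac{J-1}{2},\frac{N-J}{2}\right)$ under $H_0$.

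Next I would identify the law of $\Delta(\mathbf{z})$. Conditioning on $\mathbf{X}$ and $\mathbf{y}$, under $H_0$ the new column $\mathbf{z}$ is multivariate normal and independent of everything already in the model; the quantity $\Delta(\mathbf{z})$ is precisely the added-variable $R^2$-type statistic for regressing $\mathbf{z}$ on the class indicators after partialling out the columns of $\mathbf{X}$. Concretely, writing $\mathbf{r}$ for the residual of $\mathbf{z}$ after projecting out $\mathbf{X}$ (within the total-scatter geometry), one recognizes $\Delta(\mathbf{z})$ as the ratio of the between-class sum of squares of $\mathbf{r}$ to its total sum of squares — a one-way MANOVA/ANOVA $R^2$ on the residualized response with $J$ groups. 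Under normality and independence this residual still behaves like an i.i.d.\ Gaussian sample up to the loss of $K-1$ degrees of freedom absorbed by conditioning, so the statistic converges in distribution to a $\mathrm{Beta}\!\left(\frac{J-1}{2}, \frac{N-J}{2}\right)$ random variable as $N\to\infty$ (equivalently, $\frac{N-J}{J-1}\frac{\Delta}{1-\Delta}$ is asymptotically $F_{J-1,N-J}$). I would make this precise by citing the classical distribution of Wilks'/Pillai's statistic in the one-dimensional case $p=1$ (where Pillai's trace, $1-\Lambda$, and the $R^2$ all coincide and follow exactly a Beta law when there is no conditioning), and then argue the $K-1$ conditioning adjustments are asymptotically negligible — or shift at most finitely many degrees of freedom — so the Beta$\left(\frac{J-1}{2},\frac{N-J}{2}\right)$ bound holds in the limit.

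Combining the two pieces: each $P(\Delta(\mathbf{x}_{(i)}) \ge t_K) \to P(B \ge t_K) = 1 - (1-\alpha)^{1/(M-K+1)}$ by the choice of $t_K$ as the $(1-\alpha)^{1/(M-K+1)}$ quantile of the Beta law, and summing over the $M-K+1$ candidates yields $\limsup_N P(V^{(K)}_{\max}-V^{(K-1)}_{\max}\ge t_K) \le (M-K+1)\bigl(1-(1-\alpha)^{1/(M-K+1)}\bigr)$. A final elementary inequality, $(M-K+1)\bigl(1-(1-\alpha)^{1/(M-K+1)}\bigr) \le \alpha$ (which follows from concavity of $u\mapsto 1-(1-\alpha)^u$ at $u=0$, or from Bernoulli's inequality applied to $(1-\alpha)^{1/(M-K+1)}$), closes the argument.

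The main obstacle I anticipate is the rigorous justification that $\Delta(\mathbf{z})$ — which is the increment of Pillai's trace along the \emph{added} direction, conditional on the previously selected columns — really does converge to the stated Beta distribution, because the previously selected columns were themselves chosen by a data-dependent greedy rule (they maximize Pillai's trace at earlier steps) and are not independent of $\mathbf{z}$ in finite samples. The escape route, and presumably the reason the theorem is stated only asymptotically and only under the assumption that $\mathbf{z}$ is normal and independent of $\mathbf{X}$ and $\mathbf{y}$, is that conditionally on $(\mathbf{X},\mathbf{y})$ the new variable $\mathbf{z}$ is exchangeably Gaussian, so the conditional law of $\Delta(\mathbf{z})$ does not depend on how $\mathbf{X}$ was selected; one then only needs uniform (in the conditioning) convergence of the conditional law to the Beta limit, which reduces to a standard central-limit/Cochran-type argument for the residualized Gaussian sample. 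Handling this conditioning cleanly, and making sure the degrees-of-freedom bookkeeping gives exactly $\frac{J-1}{2}$ and $\frac{N-J}{2}$ (rather than $\frac{N-J-(K-1)}{2}$) in the limit, is where the real care is needed.
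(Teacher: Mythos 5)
Your treatment of a single candidate's increment $\Delta(\mathbf{z})$ --- via the identity \eqref{eq:traceDiff}, the vanishing of $\mathbf{t}_x$ as $N\to\infty$, and the resulting $\mathrm{B}\left(\frac{J-1}{2},\frac{N-J}{2}\right)$ limit of $b_z/t_z$ --- matches the paper's argument, and your residualization/degrees-of-freedom discussion is in fact more careful than what the paper writes. The genuine gap is in how you pass from one candidate to the maximum over the $m:=M-K+1$ candidates. Your final ``elementary inequality'' is false in the direction you need: Bernoulli's inequality (or concavity of $u\mapsto 1-(1-\alpha)^u$ on $[0,1]$ together with $f(0)=0$, $f(1)=\alpha$) gives $(1-\alpha)^{1/m}\le 1-\alpha/m$, hence
\begin{equation*}
m\bigl(1-(1-\alpha)^{1/m}\bigr)\;\ge\;\alpha,
\end{equation*}
with strict inequality for $m\ge 2$ (e.g.\ $\alpha=0.05$, $m=2$ yields $0.0506$). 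So the union bound applied at the \v{S}id\'ak-type threshold $t_K$ lands slightly \emph{above} $\alpha$ and the argument does not close; the $(1-\alpha)^{1/m}$ quantile is calibrated for an exact product of marginal CDFs, not for Bonferroni, whereas a pure union bound would instead demand the $1-\alpha/m$ quantile.

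The paper closes this step by a different mechanism: under $H_0$ the $m$ candidate increments are treated as i.i.d.\ Beta variables (conditionally on $\mathbf{y}$ and the selected columns, the candidate noise variables are mutually independent Gaussians, so their Pillai traces are conditionally independent with a common Beta law), giving $P\bigl(\max_i\Delta(\mathbf{x}_{(i)})<t_K\bigr)=\bigl(I_{t_K}(\frac{J-1}{2},\frac{N-J}{2})\bigr)^{m}=1-\alpha$ exactly. To repair your proof you must replace the union bound by this product form (or by a positive-dependence inequality of \v{S}id\'ak type), which requires explicitly invoking the mutual independence of the candidates under $H_0$ --- an assumption your write-up never uses. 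Your closing worry about the data-dependent choice of the previously selected columns is legitimate, but it is equally present in, and glossed over by, the paper's own proof, so it is not a point of divergence.
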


\begin{proof}
When $K = 1$, $V^{(1)}$ under $H_0$ follows a beta distribution $\mathrm{B}(\frac{J-1}{2}, \frac{N-J}{2})$ \citep{pillai1955some}. Since $V^{(1)}_{\max} - V^{(0)}_{\max} = V^{(1)}_{\max}$ is the maximum of $V^{(1)}_1, V^{(1)}_2, \dots, V^{(1)}_M$, which are $M$ i.i.d. random variables from the beta distribution, its CDF can be written as $I_{x}^{M}(\frac{J-1}{2}, \frac{N-J}{2})$ where $I_{x}(\frac{J-1}{2}, \frac{N-J}{2})$ is the CDF of $\mathrm{B}(\frac{J-1}{2}, \frac{N-J}{2})$. To control the type I error below $\alpha$, the threshold $t$ must satisfy $I_{t}^{M}(\frac{J-1}{2}, \frac{N-J}{2}) \le 1 - \alpha$, which is equivalent to $I_{t}(\frac{J-1}{2}, \frac{N-J}{2}) \le (1 - \alpha)^{1/M}$. Then $t$ is the $(1 - \alpha)^{1/M}$ quantile of $\mathrm{B}\left(\frac{J-1}{2}, \frac{N-J}{2}\right)$. \\

When $K > 1$, from equation \eqref{eq:traceDiff} we know that

\begin{align}
V^{(K)}_{(i)} - V^{(K-1)}_{\max} &= (\mathbf{t}_x^{\prime}\mathbf{T}_x^{-1}, -1)\mathbf{S}_B(\mathbf{t}_x^{\prime}\mathbf{T}_x^{-1}, -1)^{\prime} \nonumber \\
&\quad \times (t_z - \mathbf{t}_x^{\prime}\mathbf{T}_x^{-1}\mathbf{t}_x)^{-1}.
\label{eq:traceDiff2}
\end{align}

This equation still holds if we replace $\mathbf{S}_B$ and $\mathbf{S}_T$ with $\mathbf{S}_B/(N-J)$ and $\mathbf{S}_T/(N-J)$, which are the least squares estimators of the between-class and total covariance matrices. Since $\mathbf{X}$ and $\mathbf{z}$ are independent, their covariance $\mathbf{t}_x \to \mathbf{0}$ as $N \to \infty$. Note that $\mathbf{S}_B/(N-J)$ and $\mathbf{S}_T/(N-J)$ are finite as $N \to \infty$. Substituting $\mathbf{t}_x = \mathbf{0}$ into equation \eqref{eq:traceDiff2}, we get

\begin{align}
V^{(K)}_{(i)} - V^{(K-1)}_{\max} &= (\mathbf{0}, -1)\mathbf{S}_B(\mathbf{0}, -1)^{\prime}(t_z - 0)^{-1} \nonumber \\
&= b_z/t_z
\label{eq:traceDiff3}
\end{align}

$b_z/t_z$ is Pillai's trace for $\mathbf{z}$. Therefore, the distribution of $V^{(K)}_{(i)} - V^{(K-1)}_{\max}$ can be approximated by $V^{(1)}$, and the rest follows the scenario where $K = 1$.
\end{proof}

Based on our experience, this asymptotic approximation sometimes leads to a very conservative threshold, with the type I error falling well below the predefined $\alpha$. Therefore, we introduce an empirical approximation to mitigate this problem and achieve higher power. Suppose we have already added $K-1$ variables and the current Pillai's trace is $V^{(K-1)}_{\max}$. Since the maximum Pillai's trace for $J$ classes is $J-1$, the maximum Pillai's trace that can be added is bounded by $J - 1 - V^{(K-1)}_{\max}$, which can be viewed as the maximum Pillai's trace for a classification problem with $J - V^{(K-1)}_{\max}$ classes. Thus, at the $k$-th step, the threshold becomes the quantile from $\mathrm{B}\left(\frac{J^{\prime}-1}{2}, \frac{N-J^{\prime}}{2}\right)$ instead of $\mathrm{B}\left(\frac{J-1}{2}, \frac{N-J}{2}\right)$, where $J^{\prime} = J - V^{(K-1)}_{\max}$.

\subsection{Type I Error: Analysis and Control}
\label{subsec:typeI}

Here, we analyze the type I error under the forward ULDA framework and demonstrate that the family-wise error rate is controlled at the nominal level $\alpha$. Suppose we have $M$ variables in total, some of which are noise variables ($\mathbf{x} \in S_n$) and some are informative ($\mathbf{x} \in S_i$). At each step, there are three possible outcomes: a noise variable is selected, the selection stops, or an informative variable is selected. The entire process is illustrated in Figure~\ref{Fig:typeItree}.\\

\begin{figure}[htbp]
\centering
\includegraphics{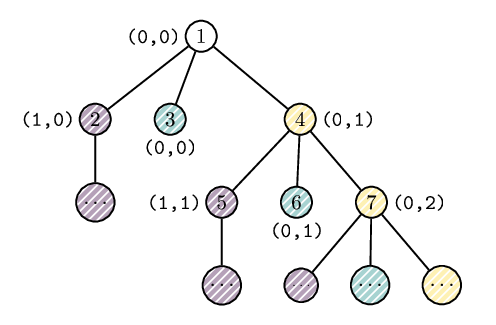}
\caption{Illustration of forward selection (Section \ref{subsec:typeI}). $(n_n, n_i)$ next to each node represents the number of noise and informative variables included so far. A yellow background indicates an informative variable being selected, a green background indicates the selection stops, and a purple background indicates that at least one type I error is made}
\label{Fig:typeItree}
\end{figure}

Suppose $\mathbf{x}$ is the variable with the largest Pillai's trace and is selected at the $K$-th step. Conditional on whether $\mathbf{x} \in S_n$ or $\mathbf{x} \in S_i$, there are four possible outcomes:

\begin{align}
	p_{K1} &= P(\mathbf{x} \text{ is added} \mid \mathbf{x} \in S_n) \nonumber \\
	p_{K2} &= P(\mathbf{x} \text{ is not added} \mid \mathbf{x} \in S_n) \nonumber \\
	p_{K3} &= P(\mathbf{x} \text{ is added} \mid \mathbf{x} \in S_i) \nonumber \\
	p_{K4} &= P(\mathbf{x} \text{ is not added} \mid \mathbf{x} \in S_i)
\end{align}

In situations with $p_{K2}$ and $p_{K4}$, $\mathbf{x}$ is not added, and forward selection stops. Therefore, no type I error is made or will be made. This corresponds to the green regions in Figure~\ref{Fig:typeItree}. For the situation with $p_{K3}$, since an informative variable is added, no type I error is made at the current step, corresponding to the yellow regions in Figure~\ref{Fig:typeItree}. The purple region in Figure~\ref{Fig:typeItree} reflects scenarios where a type I error is made, with $p_{K1}$ being the only situation that results in such an error. Theorem~\ref{thm:alpha} shows that under $H_0$, $p_{K1} \le \alpha$, meaning that at each step, the probability of branching into the purple region is controlled at $\alpha$. Now, we aim to show that, overall, the probability of ending up in any purple region is controlled at $\alpha$.\\

The probability of reaching node 2 is $p_{11} \le \alpha$. The probability of reaching node 5 is $p_{13} \times p_{21} \le p_{21} \le \alpha$. The reason we can use the product of $p_{13}$ and $p_{21}$ to calculate this probability is that under $H_0$, the variable selected in the first step is assumed to be independent of the variable selected in the second step. For nodes like node 2 and node 5, where the first noise variable is added in the current step, the probability of reaching them can be written as

\begin{equation}
p_{K1}\prod_{k = 1}^{K-1}p_{k3} \le p_{K1} \le \alpha.
\end{equation}

Meanwhile, the probability of reaching their child nodes is also controlled at $\alpha$, because reaching these nodes requires first reaching their parent node. All purple nodes fall into one of these two scenarios, so the family-wise type I error rate is controlled at $\alpha$. This means that if the forward ULDA selects $K$ variables $\{\mathbf{x}_{(1)}, \mathbf{x}_{(1)}, \dots, \mathbf{x}_{(K)}\}$, then the probability that at least one $\mathbf{x}_{(i)}$ is a noise variable is controlled at $\alpha$.\\

The forward selection framework is summarized in Algorithm \ref{alg:FoLDA}.

\begin{algorithm}[htbp]
\caption{Proposed forward selection based on Pillai's trace}
\label{alg:FoLDA}
\begin{algorithmic}[1]
\Require Data matrix $\mathbf{X} \in \mathbb{R}^{N \times M}$ with $N$ observations and $M$ features. Class label vector $\mathbf{y} \in \mathbb{R}^{N}$ containing $J$ classes
\Ensure A reduced data matrix $\mathbf{X}_{\text{new}} \in \mathbb{R}^{N \times K}$ where $K \le M$

\State index\_in $\Leftarrow \{\}$
\State index\_pool $\Leftarrow \{1,2,\ldots,M\}$
\State previous\_pillai $\Leftarrow 0$
\While{index\_pool is not empty}
    \State $l \Leftarrow$ length(index\_pool)
    \For{$i \Leftarrow 1$ to $l$}
        \State pillai\_saved[$i$] $\Leftarrow$ \texttt{get\_pillai}($\mathbf{X}$, $\mathbf{y}$, (index\_in, index\_pool[$i$])) \Comment{Calculate Pillai's trace for each feature}
    \EndFor
    \State i\_selected $\Leftarrow \arg\max_i$ pillai\_saved[$i$]
    \State $J^{\prime} = J - $ previous\_pillai
    \State threshold $\Leftarrow I_{(1-\alpha)^{1/l}}^{-1}(\frac{J^{\prime}-1}{2}, \frac{N-J^{\prime}}{2})$
    \If{pillai\_saved[i\_selected] - previous\_pillai $\le$ threshold}
        \State \textbf{break} \Comment{Stop if the improvement is below threshold}
    \Else
        \State index\_selected $\Leftarrow$ index\_pool[i\_selected]
        \State index\_in $\Leftarrow$ (index\_in, index\_selected)
        \State index\_pool $\Leftarrow$ index\_pool $\setminus$ \{index\_selected\}
        \State previous\_pillai $\Leftarrow$ pillai\_saved[i\_selected]
    \EndIf
\EndWhile
\If{length(index\_in) $= 0$}
    \State $\mathbf{X}_{\text{new}} \Leftarrow \mathbf{X}$ \Comment{No variable is significant, return all variables}
\Else
    \State $\mathbf{X}_{\text{new}} \Leftarrow \mathbf{X}$(:, index\_in)
\EndIf
\end{algorithmic}
\end{algorithm}

\section{Empirical Analysis}
\label{sec:Empirical}

In this section, we use simulation and real data to showcase the performance of three forward LDA variants:

\begin{enumerate}
	\item \texttt{Pillai}: the proposed variant using Pillai's trace (see Algorithm \ref{alg:FoLDA}). 
	\item \texttt{Wilks}: the original variant based on Wilks' $\Lambda$. The inclusion criterion is based on $p$-value, with a variable included if the $p$-value is below the predefined $\alpha$. 
	\item \texttt{Wilks-Bonferroni}: This variant applies an additional Bonferroni correction to the $p$-value compared to the second variant. If there are $(M-K+1)$ variables to choose from at the $K$-th step, the $p$-value is multiplied by $(M-K+1)$ to adjust for the multiple testing.
\end{enumerate}

Note that these variants are for selection alone. They help check the type I error and power (whether the desired variables are included). To further compare the testing accuracy, we apply ULDA (see Section \ref{subsec:ULDA}) to the selected variables.

\subsection{Type I Error Evaluation on Iris: Pure Noise and Mixed Cases}
\label{subsec:typeIemp}

We use the same simulation settings from Section \ref{subsec:LambdaTypeI} to compare the three forward LDA variants, and the results are summarized in Figure~\ref{Fig:typeIwilks3}. \texttt{Pillai} and \texttt{Wilks-Bonferroni} successfully control the type I error in both scenarios. In contrast, \texttt{Wilks} suffers from an inflated type I error rate due to multiple testing. These results validate Theorem~\ref{thm:alpha}, demonstrating that the type I error rate is well-controlled under $H_0$, where the noise variables are normally distributed and independent of both the informative variables and the response.
\begin{figure*}[htbp]
  	\centering
	\includegraphics{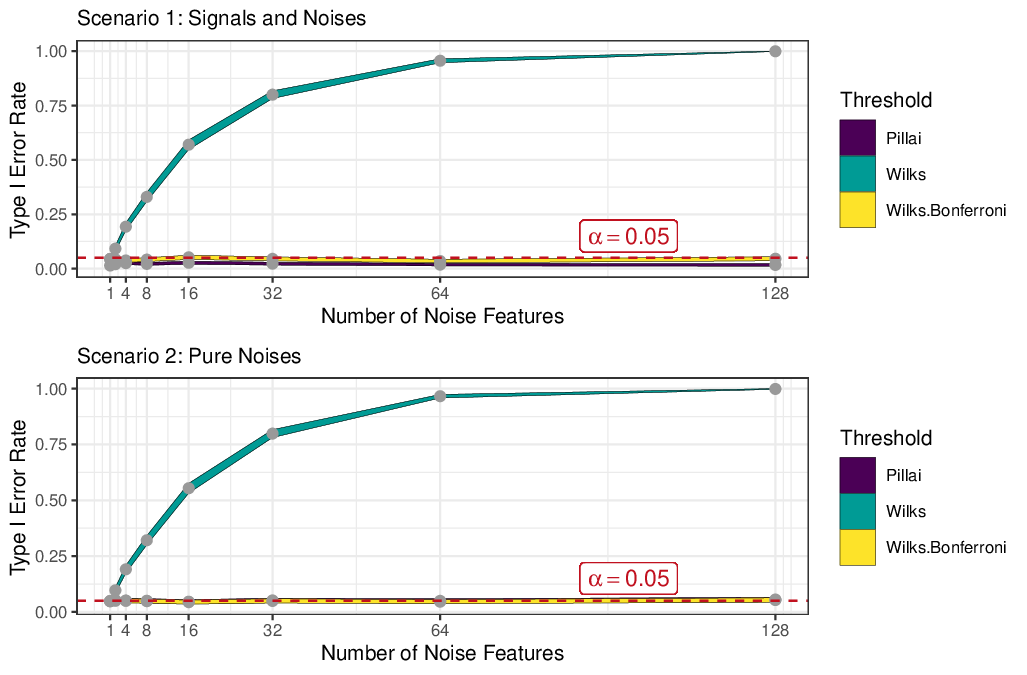}	
	\caption{Type I error rate from three forward LDA variants on the iris dataset (Section \ref{subsec:typeIemp}). The ribbon width represents the 95\% confidence interval. The ribbon for \texttt{Pillai} overlaps with the ribbon for \texttt{Wilks-Bonferroni} in the lower plot}
	\label{Fig:typeIwilks3}
\end{figure*}

\subsection{Handling $\Lambda=0$: Analysis on Simulated and Real Data}
\label{subsec:Wilks0emp}

In this section, we illustrate the primary advantage of our proposed method over the original framework: its ability to handle scenarios where Wilks' $\Lambda = 0$.\\

First, we use a simulated dataset, which contains 2,000 observations. The response variable is randomly selected from 10 classes, each with an equal probability of $1/10$. We then create a dummy matrix (one-hot encoding) of the response, resulting in 10 columns, each consisting of 1s or 0s. These 10 columns are used as our features. Ideally, these 10 features can perfectly predict the response, and a robust forward selection method should select them all. However, \texttt{Wilks} stops after selecting only one feature, $I_{\text{Class One}}$, the indicator of the first class. Here, $I_{\text{Class One}} = 1$ for observations from class one and $I_{\text{Class One}} = 0$ for the other classes. Therefore, the within-class variance is $0$, leading to Wilks' $\Lambda = 0$. Meanwhile, with \texttt{Pillai}, the feature $I_{\text{Class One}}$ contributes a value of 1 to the overall Pillai's trace, which does not trigger a stop. \texttt{Pillai} continues adding features, with each feature $I_{\text{Class i}}$ contributing a value of 1 to the overall Pillai's trace. It selects 9 features and then stops, as the maximum Pillai's trace of $J - 1 = 9$ is reached. With these 9 features, we have enough information to perfectly predict the response.\\

Next, we use real data from the National Highway Traffic Safety Administration (NHTSA) and its Vehicle Crash Test Database. This database contains data from various crash tests, including those conducted for research, the New Car Assessment Program (NCAP), and compliance purposes. The dataset captures a wide range of vehicle attributes, crash conditions, and safety outcomes. Our focus here is to predict the type of engine used in the tested vehicles. This dataset is challenging to analyze for several reasons:

\begin{itemize}
	\item Cyclic variables: Variables like impact angle are cyclic, where $359^\circ$ and $1^\circ$ should be very similar in real life but are very different in their numerical representation. We address this by transforming all angles to their $\cos$ and $\sin$ values.
	\item Missing values: For missing values, we impute them with the median for numerical variables and assign a new level for categorical variables. Additionally, we include missing value indicators for numerical variables.
	\item Multicollinearity: Variables BX1 to BX21 are measurements of different parts of the car, and some of them are highly correlated, which affects the performance of classical LDA, making forward ULDA more promising.
\end{itemize}

After data preprocessing, there are 3,273 crash tests and 173 (982 if all categorical variables are transformed into dummy variables) predictor variables. Our response variable, engine type, has 18 different types, with an imbalanced distribution: the most frequent class, 4CTF, has 1,250 occurrences, while the least frequent class, NAPP, has only one occurrence. We then apply both forward LDA variants, \texttt{Wilks} and \texttt{Pillai}, to this dataset.\\

\texttt{Wilks} stops after selecting variables $I_{\text{Model}=\text{RX}}$ and $I_{\text{Model}=\text{RX-8}}$. The reason is that only these two Mazda models (RX and RX-8) use the ROTR engine type. Therefore, the within-class variance becomes zero, allowing it to perfectly identify the ROTR engine. However, this premature stopping prevents it from identifying all the other engine types. The order of the variables selected also reveals its lack of statistical power. The first selected variable should be the most discriminative. However, the top two variables selected by \texttt{Wilks} can only correctly identify 3 instances out of 3,273, clearly not the most discriminative variables. This bias towards the perfect separation of certain classes, regardless of how few the instances are, dominates the forward selection process via smaller Wilks $\Lambda$, indicating that Wilks $\Lambda$ is not an ideal test statistic and fails to capture the most discriminative information across all classes.\\

In contrast, \texttt{Pillai} ends up selecting 175 variables. Using ULDA as the classifier and a 10-fold cross-validation, the prediction accuracies of \texttt{Wilks} and \texttt{Pillai} are 0.38 and 0.65, respectively. Results from simulation and real data demonstrate that \texttt{Pillai} outperforms \texttt{Wilks} when Wilks' $\Lambda = 0$.

\section{Conclusion}
\label{sec:Conclusion}

In this paper, we present a new forward discriminant analysis framework based on Pillai's trace, which demonstrates superiority in situations involving perfect group separations (Wilks' $\Lambda = 0$) compared to traditional methods. Our approach effectively controls the type I error rate and integrates seamlessly with ULDA, providing a unified classifier.\\

Despite these advancements, there are some limitations to consider. The criterion used in forward selection, whether based on Wilks' $\Lambda$ or Pillai's trace, serves only as a measure of goodness of fit within the MANOVA framework, as mentioned in \cite{rencher2002methods}. It does not directly reflect the model's training or testing accuracy, although these two typically align well. For those primarily concerned with accuracy, we recommend using the forward selection framework to rank variables, followed by methods such as cross-validation to choose the best subset of variables.\\

Additionally, our derivation of the distribution of the test statistic in Section \ref{subsec:FoLDA} is asymptotic and includes an empirical approximation. There is potential to develop a more precise distribution for finite sample size under alternative assumptions, which could be a direction for future research.\\

In this paper, we focus primarily on forward selection rather than stepwise selection. The main reason is the difficulty in theoretically justifying a valid exclusion criterion. Moreover, based on our experience, the addition of a variable deletion step typically results in minimal improvement in classification accuracy while significantly increasing runtime.\\

The related R package \texttt{folda} is available on CRAN.

\backmatter

%
%
%

%

%

\section*{Declarations}

Conflict of interest: The authors declare no conflict of interest.

\bibliographystyle{plainnat}
\bibliography{sn-bibliography}

\begin{thebibliography}{14}
\providecommand{\natexlab}[1]{#1}
\providecommand{\url}[1]{\texttt{#1}}
\expandafter\ifx\csname urlstyle\endcsname\relax
  \providecommand{\doi}[1]{doi: #1}\else
  \providecommand{\doi}{doi: \begingroup \urlstyle{rm}\Url}\fi

\bibitem[Dixon(1990)]{dixon1990bmdp}
Wilfrid~Joseph Dixon.
\newblock \emph{BMDP statistical software manual: to accompany the 1990
  software release}, volume~1.
\newblock University of California Press, 1990.

\bibitem[Hermans and Hobbema(1976)]{hermans1976manual}
J~Hermans and JDF Hobbema.
\newblock \emph{Manual for the ALLOC Discriminant Analysis Programs: A Package
  of FORTRAN Computer Programs}.
\newblock University of Leiden, Department of Medical Statistics, 1976.

\bibitem[Howland et~al.(2003)Howland, Jeon, and Park]{howland2003structure}
Peg Howland, Moongu Jeon, and Haesun Park.
\newblock Structure preserving dimension reduction for clustered text data
  based on the generalized singular value decomposition.
\newblock \emph{SIAM Journal on Matrix Analysis and Applications}, 25\penalty0
  (1):\penalty0 165--179, 2003.

\bibitem[{IBM Corp.}(2021)]{ibm2021ibm}
{IBM Corp.}
\newblock {IBM SPSS Statistics for Windows, Version 28.0}.
\newblock \emph{Armonk, NY: IBM Corp}, 2021.

\bibitem[Jennrich(1977)]{jennrich1977stepwise}
Robert~I Jennrich.
\newblock Stepwise discriminant analysis.
\newblock \emph{Statistical methods for digital conputers}, 76, 1977.

\bibitem[Ji and Ye(2008)]{ji2008generalized}
Shuiwang Ji and Jieping Ye.
\newblock Generalized linear discriminant analysis: a unified framework and
  efficient model selection.
\newblock \emph{IEEE Transactions on Neural Networks}, 19\penalty0
  (10):\penalty0 1768--1782, 2008.

\bibitem[McCabe(1975)]{mccabe1975computations}
George~P McCabe.
\newblock Computations for variable selection in discriminant analysis.
\newblock \emph{Technometrics}, 17\penalty0 (1):\penalty0 103--109, 1975.

\bibitem[Park et~al.(2007)Park, Drake, Lee, and Park]{park2007fast}
Haesun Park, Barry~L Drake, Sangmin Lee, and Cheong~Hee Park.
\newblock Fast linear discriminant analysis using qr decomposition and
  regularization.
\newblock Technical report, Georgia Institute of Technology, 2007.

\bibitem[Pillai(1955)]{pillai1955some}
KC~Sreedharan Pillai.
\newblock Some new test criteria in multivariate analysis.
\newblock \emph{The Annals of Mathematical Statistics}, pages 117--121, 1955.

\bibitem[Rencher and Christensen(2002)]{rencher2002methods}
Alvin~C Rencher and WF~Christensen.
\newblock Methods of multivariate analysis. a john wiley \& sons.
\newblock \emph{Inc. Publication}, 727:\penalty0 2218--0230, 2002.

\bibitem[Rencher and Larson(1980)]{rencher1980bias}
Alvin~C Rencher and Steven~F Larson.
\newblock {Bias in Wilks' $\Lambda$ in stepwise discriminant analysis}.
\newblock \emph{Technometrics}, 22\penalty0 (3):\penalty0 349--356, 1980.

\bibitem[{SAS Institute Inc.}(2014)]{sas2014sas}
{SAS Institute Inc.}
\newblock {SAS/ETS$^{\tiny{\textregistered}}$ 13.2 User’s Guide:
  High-Performance Procedures.}
\newblock \emph{Cary, NC: SAS Institute Inc.}, 2014.

\bibitem[Ye and Yu(2005)]{ye2005characterization}
Jieping Ye and Bin Yu.
\newblock Characterization of a family of algorithms for generalized
  discriminant analysis on undersampled problems.
\newblock \emph{Journal of Machine Learning Research}, 6\penalty0 (4), 2005.

\bibitem[Ye et~al.(2004)Ye, Janardan, Li, and Park]{ye2004feature}
Jieping Ye, Ravi Janardan, Qi~Li, and Haesun Park.
\newblock Feature extraction via generalized uncorrelated linear discriminant
  analysis.
\newblock In \emph{Proceedings of the twenty-first international conference on
  Machine learning}, page 113, 2004.

\end{thebibliography}

\end{document}